\documentclass[fleqn]{article}

\usepackage{amsmath,textcomp,amssymb,geometry,graphicx,amsthm,mathtools,enumerate,xfrac,pgfplots,
complexity}
\usepackage[mathscr]{euscript}
\usepackage{tikz-cd}

\newtheorem{prop}{Proposition}
\newtheorem{lemma}{Lemma}
\theoremstyle{definition}
\newtheorem{definition}{Definition}[section]

\newcommand{\B}{\mathbb{B}}
\newcommand{\defeq}{\vcentcolon=}
\newcommand{\tran}{^\top}

\newcommand{\lto}{
  \longrightarrow
}

\newcommand{\lmto}{
  \longmapsto
}

\newcommand{\F}{
  \mathbb{F}
}

\renewcommand{\R}{
  \mathbb{R}
}

\newcommand{\N}{
  \mathbb{N}
}
\DeclareMathOperator{\Pref}{Pref}
\DeclareMathOperator{\Suff}{Suff}
\DeclareMathOperator{\ONE}{ONE}

\DeclareMathOperator{\argmin}{argmin}

\newcommand{\one}{\mathbf{1}}

\newcommand{\inv} {
  ^{-1}
}

\DeclareMathOperator{\sgn}{sgn}

\title{Learning DFAs from Confidence Oracles}
\date{May 13, 2020}
\author{Wilson Wu}
\begin{document}
\maketitle
\section{Introduction}
In this report, we discuss the problem of learning a deterministic finite
automaton (DFA) from a \textit{confidence oracle}. That is, we are given access
to an oracle $Q$ with incomplete knowledge of some target language $L$ over an
alphabet $\Sigma$; the oracle maps
a string $x\in\Sigma^*$ to a score in the interval $[-1,1]$ indicating its
confidence that the string is in the language. The interpretation is that the
sign of the score signifies whether $x\in L$, while the magnitude $|Q(x)|$ represents the
oracle's confidence.
Our goal is to learn a DFA representation of the oracle that
preserves the information that it is confident in. That is, the learned DFA
should closely match the oracle wherever it is highly confident, but it need not
do this when the oracle is less sure of itself.

\subsection{Motivation}
\label{sec:motiv}
We will begin by providing a couple examples of confidence oracles that may be
encountered in practice. For one, consider a recurrent neural network (RNN)
that maps finite strings in $\Sigma^*$ to scores in the interval $[-1,1]$ (e.g.\
the last layer is a $\tanh$ nonlinearity). The
RNN outputs can be interpreted in exactly the same way as described for
confidence oracles; i.e., the sign of the score is the classification of whether
the string is in the language, while the magnitude is the RNN's confidence.
Moreover, we can include prior knowledge about the RNN in this confidence. For
example, it may be reasonable to believe that the RNN's accuracy decreases with
the length of the input string (either because the RNN was not trained on long
strings, or due to the problem of vanishing gradients~\cite{pascanu}). If this is the case, we
may simply multiply the RNN's outputs with a factor that penalizes long strings.

For another example, consider a random process (e.g.\ a simulation with random
initializations and/or dynamics) that returns a trace $x$ sampled from some
distribution $X$ and a binary
classification $y\sim Y$ for the trace. We can collect these outputs into a dataset of
independent and identically distributed samples $\{(x_i,y_i)\}_{i\in[N]}$ and
calculate the empirical prior probability $\Pr[Y=1\mid X]$. This probability,
when normalized to the interval $[-1,1]$, can again be treated as a confidence
oracle.

For each of the above examples, the original representation of the confidence
oracle is cumbersome in some way. For instance, modern RNNs may contain millions
of parameters, and it is often difficult for humans to explain or understand
their reasoning. Similar problems arise when considering e.g.\ a simulation with
random outcomes. Therefore, it is desirable to, using only black-box access,
learn a more compact representation for these oracles. DFAs are a natural
choice---they are relatively space/time efficient, and when small enough can be
easily understood by simply examining their transition graphs. Moreover,
model-checking of desired properties is better understood for automata than for
any of the other representations mentioned.

\subsection{Notation}
A DFA $A$ is defined as a tuple
\begin{equation*}
  A\defeq (S,\Sigma,\delta,s_0,F)
\end{equation*}
where $S$ is the set of states, $\Sigma$ is the alphabet,
$\delta:S\times\Sigma\lto S$ is the transition function, $s_0\in S$ is the
initial state, and $F\subseteq S$ are the accepting states. For strings of
length $k$, we recursively
extend the transition function
\begin{equation*}
  \delta^*(s,x)=\delta(\delta^*(s,x_1\ldots x_{k-1}), x_k).
\end{equation*}

For this report, we will conflate a DFA $A$ with its associated language
\begin{equation*}
  L_A\defeq\{x\in\Sigma^*\mid \delta^*(s_0, x)\in F\}.
\end{equation*}
We will also conflate any language
$L\subseteq\Sigma^*$ with the indicator function $\chi_L:\Sigma^*\lto\{-1,1\}$
defined as
\begin{equation*}
  \chi_L(x)\defeq
  \begin{cases}
    1 & x\in L\\
    -1 & x\not\in L.
  \end{cases}
\end{equation*}

\subsection{Problem Statement}
Formally, we are given a finite alphabet $\Sigma$, along with an oracle
$Q:\Sigma^*\lto[-1,1]$. For any other language $A:\Sigma^*\lto\{-1,1\}$, we
define the distance
\begin{equation*}
  d(Q, A)\defeq\sum_{x\in\Sigma^*}|Q(x)|\cdot\mathbf{1}\{\sgn(Q(x))\neq A(x)\}.
\end{equation*}
That is, the distance $d(Q,A)$ is the sum of $Q$'s confidences where the sign of
$Q(x)$ differs from $A(x)$. This definition captures the fact that we are
interested in being close to $Q$ only whenever it is confident.

Let $\mathcal{R}_\Sigma$ be the set of regular languages over $\Sigma$. For a given $\eta>0$, the problem we wish to solve is
\begin{equation}
  \label{eq:prob}
  \begin{aligned}
    & \min_{A\in\mathcal{R}_\Sigma} && |A|\\
    & \text{s.t.} && d(Q,A)\leq \eta,
  \end{aligned}
\end{equation}
where $|\cdot|:\mathcal{R}_\Sigma\lto\N$ maps a regular language to the number
of states in its DFA representation. For practical purposes, it suffices to find
a \textit{bicriteria approximation algorithm} for this optimization problem.
That is, letting $A^*$ be the optimal solution to \eqref{eq:prob}, we hope for
an algorithm that, given $Q$ and $\eta$, returns some automaton $\hat{A}$ such
that $|\hat{A}|\in\poly(|A^*|)$ and $d(Q, \hat{A})\leq c\eta$ for some $\eta$.

For this report, we will also make some assumptions about the oracle $Q$:
\begin{itemize}
\item The sum of $Q$'s confidences converges:
  \begin{equation*}
    \sum_{x\in\Sigma^*}|Q|<\infty.
  \end{equation*}
  This is really the regime we are interested in, for if the sum does not
  converge, any $A$ such $d(Q,A)$ is finite must agree with $Q$ at infinite
  locations. For a concrete example, consider a $Q$ that assigns equal weight to
  any two strings of equal length (that is, the confidence is determined solely
  by the length). Then, if $\sum_{x\in\Sigma^*}|Q|$ diverges, for any $A$ such
  that $d(Q,A)<\infty$, we must necessarily have
  \begin{equation*}
    \sum_{x\in\Sigma^k}\mathbf{1}\{\sgn(Q(x))\neq A(x)\}\lto 0
  \end{equation*}
  as $k$ approaches infinity. That is, the discrepancy between $A$ and $Q$ must
  tend to zero as string
  length increases. This kind of behavior is not very relevant to our practical
  motivations.

  As another consideration, it is likely difficult for an algorithm to learn a
  hypothesis $A$ that agrees with $Q$ on infinite strings using only finite
  queries, especially since no equivalence queries are available (as opposed to
  the setting of Angluin's $L^*$ algorithm).

  Assuming that the confidences converge, we can of course normalize such that
  without loss of generality
  \begin{equation*}
    \sum_{x\in\Sigma^*}|Q|=1.
  \end{equation*}
  That is, the confidences of $Q$ form a probability distribution on $\Sigma^*$.
  Then, we can interpret the distance as
  \begin{equation*}
    d(Q,A)=\Pr[\sgn(Q(x))\neq A(x)]
  \end{equation*}
  when $x$ is sampled from this distribution.
\item
  We assume the distribution defined by $|Q|$ is given in some closed-form expression. In
  particular, it is easy to sample $x\sim|Q|$ and to evaluate the cumulative
  distribution function $\Pr[|x|\leq k]$. Alternatively, we can assume that $Q$
  is enveloped by such a closed-form expression, i.e.
  \begin{equation*}
    \forall x\in\Sigma^*:Q(x)\leq P(x)
  \end{equation*}
  and $P$ has the properties we desire. This is a reasonable assumption; if we
  had only query access to $|Q|$, the oracle could simply hide all of its weight
  on a single string which we could not find without a search through
  exponentially many candidates.
\end{itemize}

\subsection{Interpretation as a metric space}
\label{sec:metric}
Given an oracle $Q$, we can split the information it provides into two parts: a
measure $\pi_Q:2^{\Sigma^*}\lto\R_+$ defined as
\begin{equation*}
  \pi_Q(S)\defeq\sum_{x\in S}|Q(x)|
\end{equation*}
and a language $L_Q$ defined as
\begin{equation*}
  L_Q=\{x\in\Sigma^*\mid Q(x)\geq 0\}.
\end{equation*}
Given our assumption that the confidences converge, we can normalize such
that $\pi_Q(\Sigma^*)=1$; i.e., $\pi_Q$ is just the probability distribution
defined by $|Q|$ as discussed above.

Given the measure $\pi_Q$, it is straightforward to check that the function
$d_Q:2^{\Sigma^*}\times 2^{\Sigma^*}\lto\R_+$ defined as
\begin{equation*}
  d_Q(A,B)\defeq \pi_Q(A\Delta B)
\end{equation*}
is a metric over $\Sigma^*$. This the same as the distance found in the
constraint of~\eqref{eq:prob}; that is, for any automaton $A$,
\begin{equation*}
  d_Q(L_Q, A)=d(Q,A)=\sum_{x\in\Sigma^*}|Q(x)|\cdot\mathbf{1}\{\sgn(Q(x))\neq A(x)\}.
\end{equation*}

Thus, the optimization problem~\eqref{eq:prob} is
that of finding the smallest DFA in the $\eta$-ball centered at some point
$L_Q$ in the space $(2^{\Sigma^*}, d_Q)$. Since for any finite measure $\pi_Q$ we
can always choose a finite set $S\subseteq\Sigma^*$ such that the residual
weight $\pi_Q(\Sigma^*\setminus S)$ is arbitrarily small, the set of DFA is
dense in the metric space. That is, for any $L_Q\in 2^{\Sigma^*}$ and
$\eta>0$, we can choose a finite $S\subseteq\Sigma^*$ such that
\begin{equation*}
 \pi_Q(\Sigma^*\setminus S)\leq\eta.
\end{equation*}
Then, the DFA $A$ deciding the language
$L_Q\cap S$ (which exists since the language is finite) satisfies
\begin{equation*}
  d_Q(L_Q, A)=\pi_Q(L_Q\Delta(L_Q\cap S))=\pi_Q(L_Q\setminus S)\leq \pi_Q(\Sigma^*\setminus S)\leq \eta.
\end{equation*}
In other words, there always exists a feasible solution to the problem~\eqref{eq:prob}.

\subsection{Related Work}
The learning of finite automata has been studied under various settings.
Perhaps the most famous result in this area is Angluin's $L^*$ algorithm
\cite{angluin}, which, given access to membership and equivalence queries on a
target DFA, will learn the automaton exactly. Besides this, most of the results
have been in the negative direction. For example, in the Probably Approximately
Correct setting, we wish to learn a DFA with at most $\varepsilon$ error and
with probability at least $1-\delta$, for any $0<\varepsilon,\delta<1/2$, in
time $\poly(1/\delta,1/\epsilon)$. Kearns and Valiant \cite{kearns} demonstrated
that, under common cryptographic assumptions, the PAC-learning of DFA is hard.
Another hardness result is by Gold \cite{gold}, who
showed that learning the minimal automaton consistent with a finite set of
examples is $\NP$-hard. In fact, as shown by Pitt and Warmuth \cite{pitt}, it is
hard to even approximate the number of states in the minimal automaton to any
polynomial factor.

There is also extensive work on extracting DFAs from various RNNs. Much of this work
uses white-box access to the RNN; that is, the approach is to cluster or
partition the activation space of the RNN in some way, then generate a DFA using
the clusters/partitions as states. See the recent papers by Weiss et
al.~\cite{weiss}  or Wang et al.~\cite{wang} for examples
of this. Usually in this line of work, the quality of the extracted DFA is
evaluated by comparing it to the original RNN on some test set. There are also
few provable guarantees about the resulting DFA's accuracy or size. Our problem
statement differs from this in that there is a rigorously-defined distance that
we wish to minimize, and that we aim to find an algorithm that provably achieves
some distance with (approximately) minimal DFA size. Moreover, we are not
dealing with a white-box RNN; rather, we use a more abstract confidence oracle.

Recently, there has also been much interest in using spectral learning
methods to learn weighted finite automata (WFA). The central technique here is the
``Hankel trick''---the Hankel matrix is an infinite matrix constructed from any mapping
$f:\Sigma^*\lto\R$ with the property that rank factorizations of the matrix
correspond to weighted automata
computing $f$. In practice, it is fine to use a finite sub-block of the Hankel
matrix as long as it is full-rank; the problem is that it is not always clear
how to provide such a full-rank basis. See the survey by Balle et
al.~\cite{balle}
for a survey of spectral learning methods. Our work differs from spectral
learning in that we are learning DFA, which can be thought of as a special case
of WFA. Given a Hankel matrix, it is not clear how to efficiently compute a rank
factorization with the special structure needed for a DFA. Moreover, we do not
assume that we are given a full rank basis.

\section{$\varepsilon$-Closure Algorithm}
\label{sec:eps}
Since $d_Q$ is a metric and thus satisfies the triangle inequality, one
interpretation of the problem is that we are attempting to learn some minimal
DFA $A^*$ given access to an $\eta$-perturbed oracle $L_Q$. That is, since we
define $A^*$ such that $d_Q(A^*,L_Q)\leq \eta$, if we find $A$ such that
$d_Q(A,L_Q)<\eta$, we clearly have by triangle inequality that $d(A,A^*)\leq
\eta$. The intuition for the $\varepsilon$-closure algorithm is that we wish to
use queries of $Q$ to explore each of the states and transitions of the underlying automaton
$A^*$. Once the transition topology is learned, it is not difficult to assign
the correct labels to each state.

More precisely, we use the notion of Brzozowski derivatives: given a language $L\subseteq\Sigma^*$
and string $u\in\Sigma^*$, we define
\begin{equation*}
  u\inv L\defeq \{v\in\Sigma^*\mid u\cdot v\in L\}
\end{equation*}
where $u\cdot v$ signifies the concatenation of $u$ and $v$. The Myhill-Nerode
theorem states that the language $L$ is regular if and only if it has finitely
many distinct Brzozowski derivatives \cite{nerode}; each derivative corresponds
to a state of the DFA. Given an oracle $Q$, we use the metric $d_Q$ on
Brzozowski derivatives of $L_Q$ to define the concept of $\varepsilon$-closure:

\begin{definition}
  Given an alphabet $\Sigma$, a metric $d_Q$ over $\Sigma^*$, a language $L_Q\subseteq\Sigma^*$, and some
  $\varepsilon>0$, we say that a set $S\subseteq\Sigma^*$ is
  \textbf{$\varepsilon$-closed} if
  \begin{equation*}
    \forall u\in S:\exists v\in S: d_Q(u\inv L_Q,v\inv L_Q)\leq \varepsilon.
  \end{equation*}
\end{definition}
Since derivatives correspond to states of a DFA, we think of each string $u\in
S$ as an access string to a state in the true DFA $A^*$.

Our algorithm maintains a set of access strings $S$, initialized to contain only
the empty string $\epsilon$, which accesses the initial state of $A^*$. The idea
is that if $S$ is not $\varepsilon$-closed for a suitable $\varepsilon$, we know
for sure that we have not explored all states of $A^*$. Thus, we iteratively
compute an $\varepsilon$-closure of $S$; the strings in this
$\varepsilon$-closure are the states are the states of the hypothesis DFA, while
the transitions will have been computed along the way. Concretely, the algorithm
is as follows:

\begin{itemize}
\item Initialize the queue $S\defeq\{\epsilon\}$.
\item For each unexplored $u\in S$ and $\sigma\in\Sigma$, let
  \begin{equation*}
    v^*=\argmin_{v\in S}d_Q((u\cdot\sigma)\inv L_Q, v\inv L_Q).
  \end{equation*}
  If $d_Q(u\cdot\sigma,v^*)<\varepsilon$ for some suitable $\varepsilon$, set $\delta(u,\sigma)\defeq v^*$. Otherwise, update
  \begin{equation*}
    S\defeq S\cup\{u\cdot\sigma\}
  \end{equation*}
\item Repeat until each $u\in S$ is explored.
\end{itemize}

Note that since we assume we can efficiently sample from $\pi_Q$, we can easily
approximate the metric $d_Q$ to arbitrary precision with high probability.
Moreover, since $S$ is
implemented as a queue, if $\varepsilon$ is chosen conservatively, the algorithm basically performs a breadth-first search
on a subset of the states of $A^*$ (it is not guaranteed to reach all of them).
Thus, it is guaranteed to terminate in $\poly(|A^*|)$ time.

The problem, however, is that the resulting hypothesis may not be very good. Let
us consider a concrete example where $|Q(x)|=(\lambda \Sigma)^{-|x|}$ for some
factor $\lambda>0$; that is, the confidence of $Q$ decreases geometrically with
the length of $x$. If we choose $\varepsilon$ to be the smallest value that
guarantees we do not add extraneous states, the resulting hypothesis may be
exponentially bad:
\begin{prop}
  Suppose $Q$ is an oracle with $|Q(x)|=(\lambda\Sigma)^{-|x|}$. Then, the
  $\varepsilon$-closure algorithm that sets
  \begin{equation*}
    \varepsilon\defeq 2(\lambda|\Sigma|)^{|v|}
  \end{equation*}
  when considering string $v\in S$
  learns a DFA $\hat{A}$ such that
    $|\hat{A}|\leq |A^*|$ and
    \begin{equation*}
      d(\hat{A},L_Q)\leq (\lambda|\Sigma|)^{O(|\Sigma||A^*|^2)}\eta.
    \end{equation*}
\end{prop}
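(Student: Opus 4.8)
The plan is to isolate one estimate that drives both claims: a bound on how far a Brzozowski derivative of $L_Q$ can sit, in the metric $d_Q$, from the corresponding derivative of the optimum $L_{A^*}$. Since the confidence mass beyond a prefix $u$ is discounted by exactly $(\lambda|\Sigma|)^{-|u|}$, I would first establish
\begin{equation*}
  d_Q(u\inv L_Q, u\inv L_{A^*}) = \sum_{w:\,uw\in L_Q\Delta L_{A^*}}(\lambda|\Sigma|)^{-|w|} = (\lambda|\Sigma|)^{|u|}\sum_{w:\,uw\in L_Q\Delta L_{A^*}}|Q(uw)| \le (\lambda|\Sigma|)^{|u|}\eta,
\end{equation*}
where the last step uses $d(Q,A^*)\le\eta$ and holds independently of the normalization of $|Q|$. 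This is exactly what makes $\varepsilon=2(\lambda|\Sigma|)^{|v|}\eta$ the correct scale (I read the displayed threshold as carrying this factor $\eta$, which the linear $\eta$-dependence of the conclusion forces): two access strings reaching the \emph{same} state of $A^*$ have identical $A^*$-derivatives, so by the triangle inequality their $L_Q$-derivatives lie within $(\lambda|\Sigma|)^{|u|}\eta+(\lambda|\Sigma|)^{|v|}\eta$ of one another.

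For the size bound $|\hat A|\le|A^*|$ I would map each access string $u\in S$ to the state $\delta^*(s_0,u)$ it reaches in $A^*$ and show this map is injective. If two access strings reached the same $A^*$-state, the estimate above bounds the $d_Q$-distance of their derivatives by $2(\lambda|\Sigma|)^{m}\eta$ with $m$ the larger length; because $S$ is grown breadth-first, when the longer string was examined the active threshold was at least this large, so it would have been merged rather than opened as a new state. Hence distinct states of $\hat A$ inject into distinct states of $A^*$, giving $|\hat A|\le|A^*|$ and, as a byproduct, that every access string has length at most $|A^*|-1$.

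For the error bound I would set up an error-propagation recursion over the states of $\hat A$. Writing $u_q$ for the access string of $q$, $L_{\hat A}^q$ for the language $\hat A$ accepts from $q$, and labelling $q$ by $\sgn(Q(u_q))$ so the empty-suffix term vanishes, the quantity $\Delta_q\defeq d_Q(u_q\inv L_Q, L_{\hat A}^q)$ satisfies
\begin{equation*}
  \Delta_q \le \frac{1}{\lambda|\Sigma|}\sum_{\sigma\in\Sigma}\Bigl(\varepsilon_q+\Delta_{\delta(q,\sigma)}\Bigr), \qquad \varepsilon_q\defeq 2(\lambda|\Sigma|)^{|u_q|+1}\eta,
\end{equation*}
because splitting a suffix as $\sigma w'$ discounts its weight by $(\lambda|\Sigma|)^{-1}$, and the transition $\delta(q,\sigma)$ was installed only when the merge error $d_Q((u_q\sigma)\inv L_Q, u_{\delta(q,\sigma)}\inv L_Q)$ fell below $\varepsilon_q$. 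The target error is $\Delta_{q_0}=d(\hat A,L_Q)$. This is a linear system $\vec\Delta\le \vec b+M\vec\Delta$ whose matrix $M$ has every row summing to $|\Sigma|/(\lambda|\Sigma|)=1/\lambda<1$ (convergence of the confidences forces $\lambda>1$), so $(I-M)\inv=\sum_k M^k$ converges and $\Delta_{q_0}\le\frac{1}{\lambda-1}\max_q\varepsilon_q$. As access strings have length at most $|A^*|-1$, this is $\frac{2}{\lambda-1}(\lambda|\Sigma|)^{|A^*|}\eta=(\lambda|\Sigma|)^{O(|A^*|)}\eta$, which already implies the stated (weaker) bound; the exponent $O(|\Sigma||A^*|^2)$ is what one obtains from a cruder estimate that compounds the per-transition errors multiplicatively along paths of length $|A^*|$ with branching $|\Sigma|$ instead of exploiting the geometric discount.

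The main obstacle is this last step. The delicate points are handling cycles in $\hat A$, which preclude a naive unrolling and require reasoning about the fixed point of the recursion; keeping the bookkeeping between the \emph{local} metric in which $\varepsilon_q$ is measured and the \emph{global} metric $d_Q$ honest (this is where the $(\lambda|\Sigma|)^{\pm|u_q|}$ factors enter and threaten to blow up); and confirming that the depth-dependent threshold never merges a genuinely distinct state so aggressively that the recursion's contraction is lost. Proving the clean derivative-error estimate at the outset is what keeps all three under control.
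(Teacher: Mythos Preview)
Your proposal is correct and in fact sharper than the paper's argument. Both proofs begin with the same derivative estimate $d_Q(u^{-1}L_Q,u^{-1}L_{A^*})\le(\lambda|\Sigma|)^{|u|}\eta$ and derive the size bound $|\hat A|\le|A^*|$ in the same way, by showing that any two access strings hitting the same $A^*$-state would have been merged under the BFS threshold. The divergence is in the error bound. The paper views $\hat A$ as $A^*$ followed by a chain of at most $|\Sigma|\,|A^*|$ single-transition flips $A^{(1)}=A^*,\dots,A^{(k)}=\hat A$; it combines a lemma bounding $d_Q(A^{(j)},A^{(j+1)})$ in terms of the derivative distance at the flipped states with a second lemma (your opening estimate, in a slightly more general form) to obtain a recurrence in which each step picks up a factor $(\lambda|\Sigma|)^{|A^*|}$, yielding the exponent $O(|\Sigma||A^*|^2)$ after $|\Sigma||A^*|$ steps and a final triangle inequality. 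Your approach instead writes the per-state errors $\Delta_q=d_Q(u_q^{-1}L_Q,L_{\hat A}^q)$ directly as a linear fixed-point system $\vec\Delta\le\vec b+M\vec\Delta$ with $\|M\|_\infty=1/\lambda<1$, which avoids the multiplicative compounding entirely and gives the tighter exponent $O(|A^*|)$. Your remark that the paper's bound is ``what one obtains from a cruder estimate that compounds the per-transition errors multiplicatively'' is exactly right and matches the paper's own admission that ``the bounds in the proof are fairly crude.'' The only point to be careful about in your write-up is justifying the iteration $\vec\Delta\le(I-M)^{-1}\vec b$ from the one-step inequality: this uses that each $\Delta_q$ is a priori finite (bounded by $\sum_k\lambda^{-k}$), which you should state explicitly.
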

\begin{proof}
The proof is deferred to Appendix A.
\end{proof}

Intuitively, the problem with this algorithm is that a single access string
provides very limited information about the underlying DFA $A^*$; if we are only
looking at one access string to a state, an adversarial oracle could fool us
into drawing an incorrect transition from this state using little of its
$\eta$ budget. Since there can be many strings accessing this state, the
incorrect transition then badly damages the quality of our hypothesis.

Although
we have considered some potential fixes to this algorithm (e.g.\ altering the
the metric used for $\varepsilon$-closure each iteration or learning more
access strings through sampled counterexamples), none of them seem able to
surmount the core difficulty: single access strings provide little information
about $A^*$, but to acquire more access strings we would first need better
information about $A^*$.

\section{Using SMT and MIP Solvers}
In Section~\ref{sec:eps}, we saw a polynomial-time algorithm that achieved only
exponentially-bad error. Another approach is to express
problem~\eqref{eq:prob} as either an satisfiability modulo theories (SMT)
or a mixed-integer program (MIP);
doing so, we can constrain the solution to attain the bounds we desire. Of
course, by doing this we lose any guarantee that we arrive at the solution in
polynomial time. 

In order to express the distance $d_Q$ in either SMT constraints or MIP
constraints, we need to approximate it with a finite summation. This is not hard
to do, since as we have seen in Section~\ref{sec:metric}, finite languages are
dense in the metric space $(2^{\Sigma^*}, d_Q)$.
More precisely, given
the measure $\pi_Q$ defined by the oracle $Q$, we find some $k\in\N$ such that
\begin{equation}
  \label{eq:pik}
  1-\pi_Q(\Sigma^{\leq k})\leq\eta,
\end{equation}
where $\Sigma^{\leq k}\defeq\bigcup_{j=0}^k \Sigma^j$ is the set of strings up to
length $k$,
and define the truncated measure
\begin{equation*}
  \pi_Q^{\leq k}(x)\defeq
  \begin{cases}
    \pi_Q(x) & |x|\leq k \\
    0 & |x|>k.
  \end{cases}
\end{equation*}
The measure $\pi_Q^{\leq k}$ induces a metric $d_Q^{\leq k}$ that can be computed as a finite
summation:
\begin{equation*}
  d_Q^{\leq k}(L_1, L_2)=\sum_{x\in\Sigma^{\leq k}}|Q(x)|\cdot\one\{L_1(x)\neq L_2(x)\}.
\end{equation*}
Moreover, suppose we find the minimal automaton $A_k$ such that
\begin{equation}
  \label{eq:k}
  d_Q^{\leq k}(L_Q, A_k)\leq \eta.
\end{equation}
Then, by the definition of $d_Q^{\leq k}$, we must have that
\begin{equation*}
  d_Q(L_Q, A_k)\leq 2\eta.
\end{equation*}
Also, since for any $L_Q$ and $A_k$,
\begin{equation*}
  d_Q^{\leq k}(L_Q, A_k)\leq d_Q(L_Q, A_k),
\end{equation*}
the minimal automaton satisfying~\eqref{eq:k} must be no larger than the minimal
automaton satisfying the original constraint in~\eqref{eq:prob}. Notice also
that for the same example considered in Section~\ref{sec:eps} where confidences
decrease geometrically with the length of the input string, for~\eqref{eq:pik}
to be satisfied it suffices for
\begin{equation*}
  k\in O(-\log \eta).
\end{equation*}
That is, for this example, we need only consider $\poly(1/\eta)$ example strings.

Now that we can restrict ourselves to finite example sets without much loss of
generality, we note that the problem we wish to solve is very similar to the one
shown by Pitt and Warmuth~\cite{pitt} to be $\NP$-hard to approximate. The problem
discussed by Pitt and Warmuth is that of finding the minimal automaton
exactly consistent with a finite example set. Given the same examples, what we
wish to do is to find the minimal automaton within some $\eta$-distance.
While the original exact learning problem is $\NP$-hard, there do exist SAT
encodings for the problem, e.g.\ one by Heule and Verwer~\cite{heule}. We can
take any such encoding and transform it into either an SMT formula or MIP
constraint for the $\eta$-distance problem.

\subsection{The Heule-Verwer encoding}
Heule and Verwer's SAT encoding was originally introduced as a translation from
a graph-coloring reduction of the original exact learning problem. We will take
a different approach, instead using the language of weighted finite automata
(WFA). Given an alphabet $\Sigma$, a WFA $A$ of size $n$ over the semiring $\F$ is a tuple
$(\alpha_1,\alpha_\infty, (A_\sigma)_{\sigma\in \Sigma})$ where
\begin{equation*}
  \begin{aligned}
    \alpha_1,\alpha_\infty\in \F^n, && \forall \sigma\in\Sigma: A_\sigma\in \F^{n\times n}.
  \end{aligned}
\end{equation*}
On input $x\in\Sigma^*$ of length $k$, the WFA outputs
\begin{equation*}
  A(x)\defeq \alpha_1\tran A_{x_1}\ldots A_{x_k}\alpha_\infty.
\end{equation*}
For notational convenience, we define $A_x\defeq A_{x_1}\ldots A_{x_k}$.

The key observation here is that DFAs are weighted automata over the Boolean
semiring $\B$, such that the vector (technically, semimodule elements)
$\alpha_1$ and each row of each matrix $A_\sigma$ are all standard
basis vectors. (That is, each vector is $\top$ at exactly one index and $\bot$
everywhere else.) Then, $\alpha_1$ corresponds to the initial state of the DFA,
$\alpha_\infty$ corresponds to the accepting states, and each $A_\sigma$
corresponds to the transitions for the symbol $\sigma$.

This characterization of DFAs suggests a natural SAT encoding for the problem of
exact learning from finite example sets: to determine whether a DFA of size $n$
exists satisfying positive examples $C_+$ and negative examples $C_-$, we simply
check if the conjunction of the formulae
\begin{equation}
  \label{eq:badencode}
  \begin{aligned}
    & \ONE(\alpha_1) \land \forall\sigma\in \Sigma:\forall {j\in[n]}:\ONE(A_{\sigma i})\\
    & \forall x\in C_+: \alpha_1\tran A_x\alpha_\infty=\top\\
    & \forall x\in C_-: \alpha_1\tran A_x\alpha_\infty=\bot
  \end{aligned}
\end{equation}
is satisfiable with any assignment to the variables
$\alpha_1,\alpha_\infty\in\B^n$ and $(A_\sigma)_{\sigma\in\Sigma^*}\in
\B^{n\times n}$.
Here, $A_{\sigma i}$ is the $i$th row of $A_{\sigma}$, and $\ONE$ is the
predicate checking that exactly one element in the vector is $\top$, i.e.
\begin{equation*}
  \ONE(v)\defeq\bigvee_{i=1}^n v_i\land \bigwedge_{i<j}\neg (v_i\land v_j).
\end{equation*}
(When we later move to an MIP encoding in Section~\ref{sec:mip}, we
can replace this with the more natural $\ONE(v)\defeq(\sum_{i=1}^n v_i=1)$.)
The first formula of~\eqref{eq:badencode} states that
$(\alpha_1,\alpha_\infty,(A_\sigma)_{\sigma\in\Sigma})$ indeed forms a DFA,
while the next second and third formulae constrain the DFA outputs to be
consistent with the positive and negative examples, respectively.
Thus, any satisfying assignment corresponds directly to a DFA; if on
the other hand the formula is unsatisfiable, this means there is no DFA of size
$n$ matching the examples.

The problem with this encoding is that it is too large; the multiplication
$A_x=A_{x_1}\ldots A_{x_k}$ results in a number of clauses that is exponential
in the length of the string $x$. The Heule-Verwer encoding gets around this by
considering all example prefixes $\Pref(C_+\cup C_-)$. Each prefix
$x\in\Pref(C_+\cup C_-)$
is assigned a vector $\alpha_x$, and we wish to constrain
\begin{equation*}
  \alpha_x\tran =\alpha_1\tran A_x.
\end{equation*}
This is done by constraining each $\alpha_{x\cdot \sigma}$ with $(x\cdot \sigma)\in P$ to be the
multiplication of $\alpha_{x}$ with the transition matrix $A_\sigma$. That is,
the encoding consists of the formulae
\begin{equation}
  \label{eq:for}
  \begin{aligned}
  & \forall x\in\Pref(C_+\cup C_-):\ONE(\alpha_x)\land \forall\sigma\in \Sigma:\forall {j\in[n]}:\ONE(A_{\sigma i})\\
  & \forall (x\sigma)\in \Pref(C_+\cup C_-): \alpha_{x\sigma}\tran=\alpha_x\tran A_\sigma\\
  & \forall x\in C_+: \alpha_x\tran \alpha_\infty=\top\\
  & \forall x\in C_-: \alpha_x\tran \alpha_\infty=\bot.
  \end{aligned}
\end{equation}
In practice, we can set $\alpha_1$ to be the first standard basis vector $e_1$
by symmetry.
Note that the constraint $\ONE(\alpha_x)$ is actually
redundant for any non-empty $x\neq\epsilon$; however, Heule and Verwer
empirically demonstrated that the addition of these redundant constraints
improves SAT solver performance.

Written from the perspective of WFAs, it is also clear that the choice
of the forward direction
\begin{equation*}
  \alpha_x\tran =\alpha_1\tran A_x.
\end{equation*}
in the Heule-Verwer encoding is more or less arbitrary. It works just as well to
use the backward vectors
\begin{equation*}
  \beta_x = A_x\alpha_\infty.
\end{equation*}
(Notice that as opposed to $\alpha_x$, the vectors $\beta_x$ are not necessarily
standard basis elements.)
Doing so results in the similar encoding
\begin{equation}
  \label{eq:back}
  \begin{aligned}
  & \forall\sigma\in \Sigma:\forall {j\in[n]}:\ONE(A_{\sigma i})\\
  & \forall (\sigma x)\in \Suff(C_+\cup C_-): \beta_{\sigma x}=A_\sigma\beta_x\\
  & \forall x\in C_+: \alpha_1\tran \beta_x=\top\\
  & \forall x\in C_-: \alpha_1\tran \beta_x=\bot.
  \end{aligned}
\end{equation}
Since $\alpha_1$ is constrained to be a standard basis vector while
$\alpha_\infty$ is not, the backwards encoding is somewhat smaller than that in the
forward direction. As seen in Section~\ref{sec:exp}, it is empirically shown to
have moderately better performance. To our knowledge, this variation on the
Heule-Verwer encoding is novel.

\subsection{SMT and MIP encodings for $\eta$-distance}
\label{sec:mip}
In order to solve the $\eta$-distance problem on finite example strings, we
simply replace the exact constraints in~\eqref{eq:back} with new SMT constraints
(over linear real arithmetic) that state the summed deviations from $L_Q$,
weighted by $|Q|$, is no more than $\eta$. That is, we define the function 
$h:\B\lto\R$ that maps $\top\lmto 1$ and $\bot\lmto 0$. Then, we replace the constraints
\begin{equation*}
  \begin{aligned}
    & \forall x\in C_+: \alpha_1\tran \beta_x=\top\\
    & \forall x\in C_-: \alpha_1\tran \beta_x=\bot.
  \end{aligned}
\end{equation*}
with the constraint
\begin{equation}
  \label{eq:smt}
  \sum_{x\in C_+}|Q(x)|(1-h(\alpha_1\tran \beta_x))
  +\sum_{x\in C_-}|Q(x)|(h(\alpha_1\tran \beta_x))\leq \eta.
\end{equation}

While the constraint~\eqref{eq:smt} is specifically for the backwards
encoding~\eqref{eq:back}, the same technique works for the forward encoding, or
in fact any SAT encoding for the problem of exact learning on a finite example set.

Instead of using a SMT encoding, note that we can also treat each of
$\alpha_1,\alpha_\infty,(A_\sigma)_{\sigma\in\Sigma}$ as a $0/1$ vector or
matrix over $\R$. This does not change the DFA definition, since the mapping $h$
is compatible with addition and multiplication as long as we do not try to add
one and one (and this does not happen in the matrix multiplication for DFA).
That is, the above SMT encoding can be translated to an MIP program, where the
variables $\beta_x$ and $A_\sigma$ are constrained to be $0/1$ vectors/matrices
over $\R$.

\subsection{Experiments}
\label{sec:exp}
In order to evaluate the efficiency of our encodings, we measured the time
elapsed for the Z3 SMT solver~\cite{z3} on both the exact learning problem (where the
encoding is a SAT formula) and the $\eta$-distance problem.
The total time includes the time taken to iteratively try the encoding of each increasing
size $n\in\N$ until the smallest DFA is found. (Binary search could also be
used, but since time elapsed increases with the size of the encoding, this is
not too helpful.) In our experiments,
we use the example of oracle confidences that decrease geometrically with input
string length at rate $\lambda=0.9$. Our finite example set is $\Sigma^{\leq k}$, the set of all
strings up to length $k$, for some finite constant $k>0$. All experiments were
run with a single thread on an Intel Core i7-6700 CPU with 16GB of memory.

\begin{figure}[!h]
\caption{Time elapsed vs. size of true DFA for SMT encoding (Z3), $\eta=0$}
\centering
\includegraphics[width=0.6\textwidth]{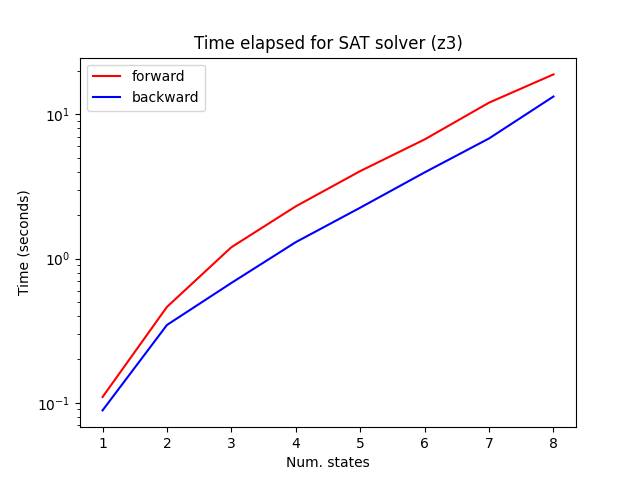}
\label{fig:one}
\end{figure}

\begin{figure}[!h]
\caption{Time elapsed vs. size of true DFA for SMT encoding (Z3), $\eta=10^{-6}$}
\centering
\includegraphics[width=0.6\textwidth]{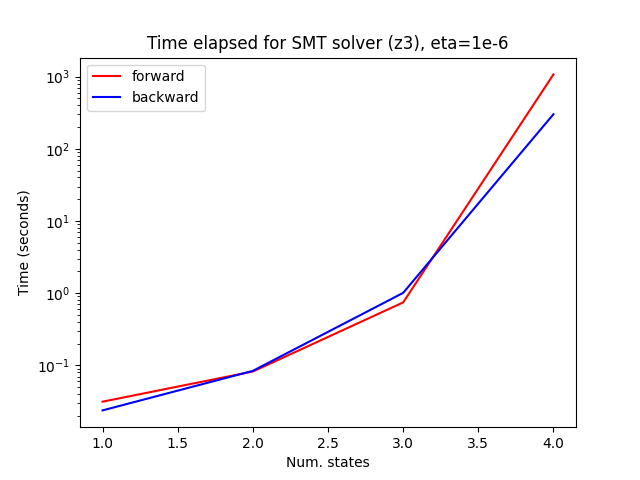}
\label{fig:two}
\end{figure}

See Figure~\ref{fig:one} and Figure~\ref{fig:two} for plots of the time required
by the Z3 solver versus the
number of states in the underlying DFA for the exact learning and
$\eta$-distance problems, respectively. The true language used for each data
point is
\begin{equation}
  \label{eq:mod}
  L_n=\{x\in\Sigma^*\mid \sum_{i=1}^{|x|}\one\{x_i=1\}\equiv 0\mod n\}.
\end{equation}
Note that the minimal automaton for $L_n$ has exactly $n$ states. For this
experiment, the example set $\Sigma^{\leq k}$ was chosen such that $k$ is a
constant larger than the size $n$ of the largest DFA used. From the plots, it is
apparent that, for the exact learning problem, the time elapsed grows
exponentially in the number of states. For the $\eta$-distance problem, the time
seems to grow super-exponentially.

\begin{figure}[!h]
\caption{Time elapsed vs. size of true DFA (Gurobi), $\eta=10^{-6}$}
\centering
\includegraphics[width=0.6\textwidth]{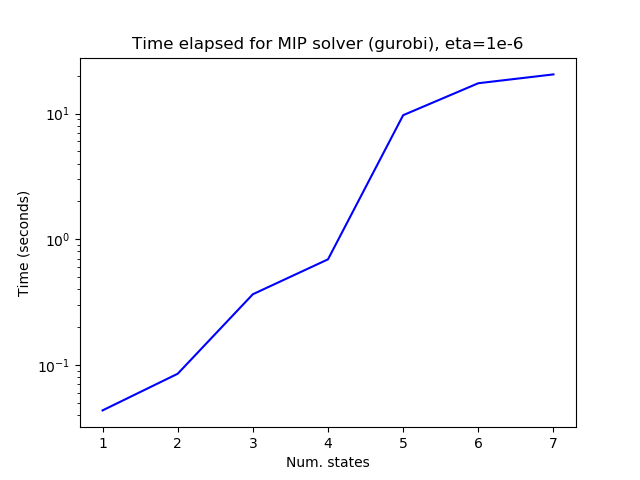}
\label{fig:three}
\end{figure}

We also ran the same time-versus-size experiment for the MIP program, using the
Gurobi solver~\cite{gurobi}; see Figure~\ref{fig:three}. Since the encoding is
formed from the conjunction of several
matrix-multiplication constraints, it is intuitively reasonable that a MIP
optimizer may perform better than an SMT solver. The experiments confirm this
to be the case: as seen in Figure~\ref{fig:two}, the Gurobi solver finds the
minimal automaton in significantly less time. However, the plot also shows that
the time elapsed still grows roughly exponentially with the true number of
states.

\begin{figure}[!h]
\caption{Time elapsed vs. size of SMT encoding (Z3), $\eta=10^{-6}$}
\centering
\includegraphics[width=0.6\textwidth]{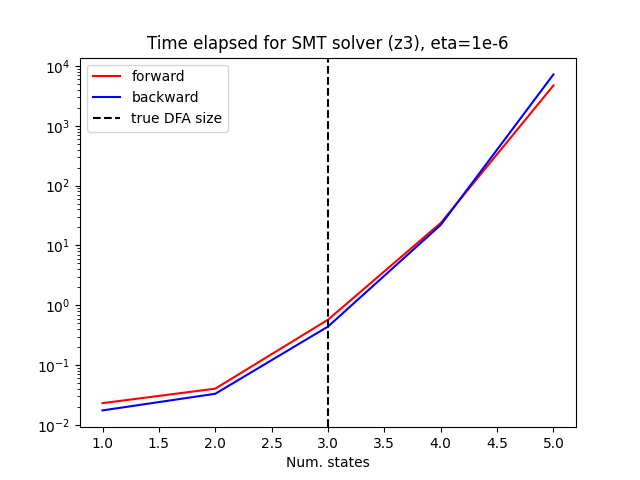}
\label{fig:four}
\end{figure}

\begin{figure}[!h]
\caption{Time elapsed vs. size of SMT encoding (Gurobi), $\eta=10^{-6}$}
\centering
\includegraphics[width=0.6\textwidth]{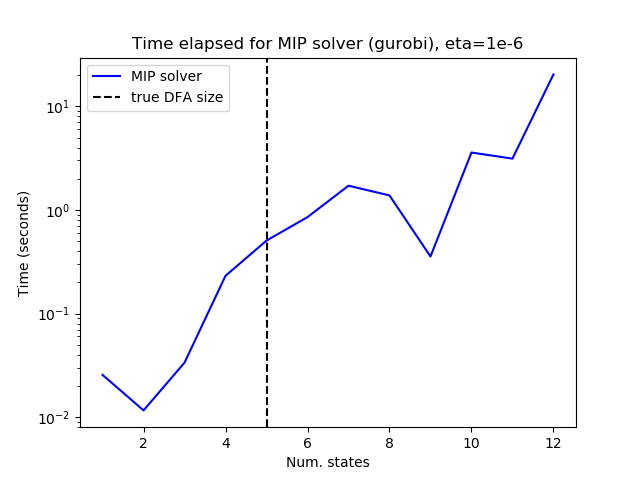}
\label{fig:five}
\end{figure}

Another reasonable hypothesis is that the time required to solve either the SMT
or MIP encoding peaks when the size of the encoding is equal to the size of the
true minimal DFA. This is supported by the intuition that there are many more
DFAs of size greater than minimum, and thus the larger encodings might be
less-constrained, with far more satisfying solutions. However, as our
experiments show (see Figure~\ref{fig:four} and Figure~\ref{fig:five} for the
SMT and MIP encodings, respectively), this intuition does not apply
---the time required by the
solver continues to grow exponentially with the size of the encoding, even
though the size of the true DFA was fixed to a constant size. This is true for
both the SMT and MIP encodings.

\section{Conclusion}

In this report, we covered a polynomial-time algorithm based on the concept of
$\varepsilon$-closure, which turns out to output hypotheses with
exponentially-bad error. We also examined SMT and MIP encodings of the same
problem. For these encodings, the output is guaranteed to be close to the input
oracle since it is constrained to be so. However, there is no guarantee of an
efficient running time, and indeed the time required by both the SMT solver and
MIP optimizer empirically appeared to grow exponentially with the size of the
true minimal DFA. At least qualitatively, this provides some evidence that the
problem of learning minimal DFAs within an $\eta$-distance of confidence oracles
may be computationally hard.

Thus, a prudent next step would be to more carefully examine the complexity of
the optimization problem~\eqref{eq:prob}. Although finding an exact solution
(either deterministically or with high probability) intuitively seems difficult,
we do not yet have any rigorous hardness result. Assuming in the future we do
show this problem is $\NP$-hard, another question we can examine is the hardness
of approximation. That is, we are interested in the bicriteria approximation
problem of finding a DFA that is close in size to the solution
of~\eqref{eq:prob}, while being e.g.\ a constant factor $c\eta$ away from the
input oracle. It is not yet clear what approximation ratios can be achieved for
these two criteria, or whether there is a way of trading off between size and
distance. Note that the $\varepsilon$-closure algorithm already provides one weak
positive result in this direction: it is possible to achieve optimal size with
exponential approximation factor for distance.

Aside from complexity-theoretic concerns, we would also like to solve this
problem in practice, for the reasons listed in Section~\ref{sec:motiv}. In this
vein, we have not yet shown for certain that our SMT and MIP encodings are
impractical. We have only run experiments on the ``mod'' language family
described described in Equation~\ref{eq:mod}; it is possible that these are particularly
difficult languages to learn. Since most languages found in practice are not of
the ``mod'' family, it is worthwhile to experiment on other languages, for
example Tomita grammars or random DFAs.

Furthermore, there might be some way of combining the $\varepsilon$-closure and
SMT approaches to achieve a decent approximation in a practical amount of time.
For example, the SMT solver could be fed information from a partial hypothesis
by the $\varepsilon$-closure algorithm in order to reduce the search space.
Along with this, the SMT solver may be able to provide symbolic counterexamples
that are useful to the $\varepsilon$-closure algorithm. Further investigation is necessary to determine if such an approach is useful in
practice.

Finally, this project touches upon several of the topics covered in EECS 219C.
The most immediate connection is the encoding of our problem in the form of an
SMT formula over linear real arithmetic. The SMT solving techniques discussed in
class are present in some form in the Z3 solver used for this project, along
with many more advanced methods. Moreover, both Angluin's $L^*$ algorithm for
exact learning of DFA, as well as our $\varepsilon$-closure algorithm, can be
considered instantiations of oracle-guided inductive synthesis (OGIS).
Beyond this, the main connection to the course
is within the motivation for this project. That is, the reason we wish to
learn a finite automaton from the confidence oracle in the first place is
because, as seen in class, there exist various model-checking techniques that
work well on automata. Thus, given a confidence oracle in some abstract or
cumbersome representation, it is useful to be able to replace it with a DFA that
functions similarly, but is much easier to formally verify.

\section{Acknowledgements}
Thanks to Marcell Vazquez-Chanlatte and Sanjit Seshia.

\newpage
\bibliography{works}
\bibliographystyle{ieeetr}
\newpage
\appendix
\section{Proof of Proposition 1}
Recall that we are considering the special case where $|Q|=(\lambda
\Sigma)^{-|x|}$. We will first prove some helpful lemmas.

\begin{lemma}
  \label{lem:one}
  For any languages $L_1,L_2\subseteq \Sigma^*$, and any strings
  $u,v\in\Sigma^*$, if $|u|\geq |v|$, then
  \begin{equation*}
    d_Q(u\inv L_2, v\inv L_2)\leq 2(\lambda|\Sigma|)^{|u|}d_Q(L_1,L_2) + d_Q(u\inv L_1+v\inv L_1).
  \end{equation*}

  In particular, if $u\inv L_1=v\inv L_1$ then
  \begin{equation*}
     d_Q(u\inv L_2, v\inv L_2)\leq 2(\lambda|\Sigma|)^{|u|}d_Q(L_1,L_2).
  \end{equation*}
\end{lemma}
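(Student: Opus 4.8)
The plan is to route everything through the triangle inequality for $d_Q$, using the $L_1$-derivatives as intermediate points. Since $d_Q$ is a metric (Section~\ref{sec:metric}), I can write
\begin{equation*}
  d_Q(u\inv L_2, v\inv L_2) \leq d_Q(u\inv L_2, u\inv L_1) + d_Q(u\inv L_1, v\inv L_1) + d_Q(v\inv L_1, v\inv L_2).
\end{equation*}
The middle term is precisely the second term of the claimed bound, so the whole problem reduces to bounding each of the two ``cross'' terms $d_Q(u\inv L_1, u\inv L_2)$ and $d_Q(v\inv L_1, v\inv L_2)$ by $(\lambda|\Sigma|)^{|u|}\, d_Q(L_1, L_2)$.

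The key step is a direct computation exploiting that $|Q|$ depends only on string length. For any string $w$, membership $w \in u\inv L_1 \Delta u\inv L_2$ holds exactly when $uw \in L_1 \Delta L_2$; that is, the symmetric difference of the two $u$-derivatives is the set of $w$ for which $uw$ lies in $L_1 \Delta L_2$. Summing the weights and substituting $|Q(uw)| = (\lambda|\Sigma|)^{-|u|-|w|}$ gives
\begin{equation*}
  d_Q(u\inv L_1, u\inv L_2) = \sum_{w:\, uw \in L_1 \Delta L_2} (\lambda|\Sigma|)^{-|w|} = (\lambda|\Sigma|)^{|u|} \sum_{w:\, uw \in L_1\Delta L_2} |Q(uw)|.
\end{equation*}
The remaining sum ranges only over those elements of $L_1 \Delta L_2$ that have $u$ as a prefix, hence is at most $\pi_Q(L_1 \Delta L_2) = d_Q(L_1, L_2)$, yielding $d_Q(u\inv L_1, u\inv L_2) \leq (\lambda|\Sigma|)^{|u|}\, d_Q(L_1, L_2)$, and identically $d_Q(v\inv L_1, v\inv L_2) \leq (\lambda|\Sigma|)^{|v|}\, d_Q(L_1, L_2)$.

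To finish I would combine the pieces. The base satisfies $\lambda|\Sigma| \geq 1$ (indeed, summability of the confidences forces $\lambda > 1$ and hence $\lambda|\Sigma| > 1$), so the hypothesis $|u| \geq |v|$ gives $(\lambda|\Sigma|)^{|v|} \leq (\lambda|\Sigma|)^{|u|}$; adding the two cross-term bounds then produces the factor $2(\lambda|\Sigma|)^{|u|}$ in front of $d_Q(L_1, L_2)$, exactly as claimed. The special case is then immediate, since $u\inv L_1 = v\inv L_1$ makes the middle term $d_Q(u\inv L_1, v\inv L_1)$ vanish. I expect the only genuine subtlety to be the prefix-restriction observation: recognizing that passing to a derivative rescales every surviving weight by the single uniform factor $(\lambda|\Sigma|)^{|u|}$. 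This is precisely where the length-homogeneity of $|Q|$ is used, and it is the one step that would fail for a general oracle.
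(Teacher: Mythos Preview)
Your proof is correct and follows essentially the same route as the paper: a triangle-inequality decomposition through $u^{-1}L_1$ and $v^{-1}L_1$, together with the bound $d_Q(w^{-1}L_1,w^{-1}L_2)\leq(\lambda|\Sigma|)^{|w|}d_Q(L_1,L_2)$ obtained from the length-homogeneity of $|Q|$. Your justification of the cross-term bound via the prefix-restriction observation is in fact the same computation the paper performs (summing over all length-$|u|$ prefixes and then dropping all but $u$), and you are slightly more explicit than the paper in noting that $\lambda|\Sigma|>1$ is what makes $|u|\geq|v|$ give the desired monotonicity.
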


\begin{proof}
  First, notice that
  \begin{equation*}
    \begin{aligned}
      d_Q(L_1,L_2)&\geq \sum_{x\in \Sigma^{|u|}}\sum_{y\in \Sigma^*}(\lambda|\Sigma|)^{-|x\cdot y|}\one\{L_1(x\cdot y)\neq L_2(x\cdot y)\}\\
      &=\sum_{x\in \Sigma^{|u|}}(\lambda|\Sigma|)^{-|x|}\sum_{y\in \Sigma^*}(\lambda|\Sigma|)^{-|y|}\one\{L_1(x\cdot y)\neq L_2(x\cdot y)\}\\
      &=(\lambda|\Sigma|)^{-|u|}\sum_{x\in\Sigma^{|u|}}d_Q(x\inv L_1, x\inv L_2)\\
      &\geq (\lambda|\Sigma|)^{-|u|}d_Q(u\inv L_1, u\inv L_2).
    \end{aligned}
  \end{equation*}
  Rearranging, $d_Q(u\inv L_1, u\inv L_2) \leq (\lambda|\Sigma|)^{-|u|}d_Q(L_1,
  L_2)$.

  Using the triangle inequality,
  \begin{equation*}
    \begin{aligned}
      d_Q(u\inv L_2, v\inv L_2)&\leq d_Q(u\inv L_2, u\inv L_1) + d_Q(u\inv L_1, v\inv L_1) + d_Q(v\inv L_1, v\inv L_2)\\
      &\leq 2(\lambda|\Sigma|)^{|u|}d_Q(L_1, L_2)+d_Q(u\inv L_1, v\inv L_1).
    \end{aligned}
  \end{equation*}
  since by assumption $|u|\geq |v|$.
\end{proof}

\begin{lemma}
  \label{lem:two}
  Let $A$ and $A'$ be two DFAs over the same set of states $S$, differing on
  only a single transition. That is, let $\delta$ and $\delta'$ be the
  transition functions for $A$ and $A'$ respectively. Then for some $t,s,s'\in
  S$, and $\sigma\in \Sigma$, we have $\delta(t,\sigma)=s$ while
  $\delta(t,\sigma)=s'$, and otherwise $\delta$ and $\delta'$ agree. Let
  $u,u'\in\Sigma^*$ be access strings to $s,s'$ respectively, and let
  $v\in\Sigma^*$ be the shortest access string to $t$. Then,
  \begin{equation*}
    d_Q(A,A')\leq (\lambda\inv-1)\inv\lambda^{-|v|}d_Q(u\inv A, (u')\inv A).
  \end{equation*}
\end{lemma}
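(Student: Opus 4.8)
The plan is to analyze the symmetric difference $A\Delta A'$ directly, exploiting that the two automata run identically until they first traverse the modified transition. First I would observe that for any string $x$, the runs of $x$ in $A$ and in $A'$ stay in lockstep until the first time the automaton sits in state $t$ and reads $\sigma$; before that moment every transition taken is common to both machines. Consequently, if $A(x)\neq A'(x)$ then $x$ must use the transition $(t,\sigma)$, and we may write $x=w\sigma y$, where $w$ is the (unique) prefix along which the run reaches $t$ for the first time without yet using $(t,\sigma)$. Letting $D$ denote the set of all such prefixes, the set $\{w\sigma:w\in D\}$ is prefix-free, so
\begin{equation*}
  A\Delta A'\subseteq\bigsqcup_{w\in D}w\sigma\,\Sigma^*,
\end{equation*}
and the decomposition $x=w\sigma y$ is unambiguous.

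Next I would exploit the product form of the weights. Since $|Q(w\sigma y)|=(\lambda|\Sigma|)^{-|w|-1}(\lambda|\Sigma|)^{-|y|}$, factoring the prefix weight out of $\pi_Q(A\Delta A')$ gives
\begin{equation*}
  d_Q(A,A')\leq\sum_{w\in D}(\lambda|\Sigma|)^{-|w|-1}\sum_{y\in\Sigma^*}(\lambda|\Sigma|)^{-|y|}\one\{A(w\sigma y)\neq A'(w\sigma y)\}.
\end{equation*}
Because $\delta(t,\sigma)=s$ in $A$ and $\delta(t,\sigma)=s'$ in $A'$, after the prefix $w\sigma$ the machine $A$ continues from $s$ while $A'$ continues from $s'$; the inner sum is therefore the derivative distance $d_Q(u\inv A,(u')\inv A)$ (with the caveat discussed below), which is independent of $w$ and can be pulled out of the sum.

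It then remains to bound $\sum_{w\in D}(\lambda|\Sigma|)^{-|w|}$. Every $w\in D$ reaches $t$, so $|w|\geq|v|$ since $v$ is the shortest access string to $t$; and there are at most $|\Sigma|^{\ell}$ strings of each length $\ell$. Summing the geometric series,
\begin{equation*}
  \sum_{w\in D}(\lambda|\Sigma|)^{-|w|}\leq\sum_{\ell\geq|v|}|\Sigma|^{\ell}(\lambda|\Sigma|)^{-\ell}=\sum_{\ell\geq|v|}\lambda^{-\ell},
\end{equation*}
which, in the convergent regime, evaluates to a constant multiple of $(\lambda\inv-1)\inv\lambda^{-|v|}$; recombining with the factored prefix weight yields the claimed bound.

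The main obstacle is the step identifying the inner sum with $d_Q(u\inv A,(u')\inv A)$: strictly, the continuation of $A'$ from $s'$ runs in $A'$, not in $A$, so the honest quantity is $d_Q(u\inv A,(u')\inv A')$. I expect to close this gap by a triangle-inequality and recursion argument, writing $d_Q(u\inv A,(u')\inv A')\leq d_Q(u\inv A,(u')\inv A)+d_Q((u')\inv A,(u')\inv A')$ and observing that the second term is again a single-transition discrepancy, now accessed only through strictly longer prefixes, so that unrolling it feeds back into the same geometric series. Verifying that this recursion converges and contributes only to the constant $(\lambda\inv-1)\inv$ rather than blowing up, and pinning down the exact constant together with the precise role of the $|\Sigma|$ factors, is the delicate part of the argument.
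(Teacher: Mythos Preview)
Your approach is essentially the paper's. The paper defines $S_t$ (your $D$) as the set of minimal access strings to $t$, writes
\[
d_Q(A,A')=\sum_{x\in S_t}\sum_{y\in\Sigma^*}(\lambda|\Sigma|)^{-|x\sigma y|}\,\one\{A(x\sigma y)\neq A'(x\sigma y)\},
\]
factors out $(\lambda|\Sigma|)^{-(|x|+1)}$, identifies the inner sum as a derivative distance independent of $x$, and bounds $\sum_{x\in S_t}(\lambda|\Sigma|)^{-(|x|+1)}$ by the geometric tail $\sum_{k\ge |v|}\lambda^{-(k+1)}$. These are exactly your four steps.

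Regarding your ``main obstacle'': you are right that after the prefix $w\sigma$ the second machine continues in $A'$, so the inner sum is literally $d_Q(u\inv A,(u')\inv A')$ rather than $d_Q(u\inv A,(u')\inv A)$. The paper's proof in fact lands on $d_Q(u\inv A,(u')\inv A')$ as well and simply stops, never bridging to the $d_Q(u\inv A,(u')\inv A)$ that appears in the lemma statement. So the discrepancy you flagged is present in the paper too and is not resolved there; your proposal is already at least as careful as the paper's argument, and the triangle-inequality recursion you sketch is a sensible way to close the gap that the paper leaves open.
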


\begin{proof}
  Let $S_t$ be the set of all minimal access strings to $t$, in the sense that
  no prefix of any $x\in S_t$ is also an access string to $t$. Then,
  \begin{equation*}
    \begin{aligned}
    d_Q(A,A')&=\sum_{x\in S_t}\sum_{y\in \Sigma^*}(\lambda|\Sigma|)^{-|x\cdot\sigma\cdot y|}\one\{A(x\cdot \sigma\cdot y)\neq A'(x\cdot \sigma\cdot y)\}\\
    &=\sum_{x\in S_t}\sum_{y\in \Sigma^*}(\lambda|\Sigma|)^{-|x\cdot\sigma\cdot y|}\one\{u\inv A(y)\neq (u')\inv A'(y)\}\\
    &=\sum_{x\in S_t}(\lambda|\Sigma|)^{-(|x|+1)}d_Q(u\inv A, (u')\inv A')\\
    &\leq \sum_{k=|v|}^{\infty}\sum_{x\in\Sigma^k}(\lambda|\Sigma|)^{-(k+1)}d_Q(u\inv A, (u')\inv A')\\
    &\leq\sum_{k=|v|}^{\infty}(\lambda)^{-(k+1)}d_Q(u\inv A, (u')\inv A')\\
    &\leq (\lambda\inv-1)\inv\lambda^{-|v|}d_Q(u\inv A, (u')\inv A').
    \end{aligned}
  \end{equation*}
\end{proof}



\begin{proof}[Proof of Proposition 1]
  Recall that we define $A^*$ to be the minimal automaton such that
  \begin{equation*}
   d_Q(A^*, L_Q)\leq\eta.
  \end{equation*}
  We observe that the algorithm performs a BFS traversal of a subset
  of the states of $A^*$. This is due to the choice of
  $\varepsilon=2(\lambda|\Sigma|)^{|u|}\eta$ when exploring the state accessed
  by $u\in\Sigma^*$. That is, the algorithm only adds a new state $v$ to $S$ if
  \begin{equation*}
    d_Q(u\inv L_Q,v\inv L_Q)\geq 2(\lambda|\Sigma|)^{|u|}\eta\geq 2(\lambda|\Sigma|)^{|u|}d_Q(A^*, L_Q).
  \end{equation*}
  By Lemma~\ref{lem:one}, this can only occur if $d_Q(u\inv A^*, v\inv A^*)>0$;
  i.e., the strings $u$ and $v$ access distinct states in the minimal automaton
  $A^*$. Therefore, the $\varepsilon$-closure algorithm never adds extraneous
  states to $S$; the access strings in the queue represent a subset of the
  states of $A^*$, without duplicates. The fact that this subset is traversed in
  BFS order is simply due to the fact that $S$ implemented as a queue---since
  this is the case, we know that each string in $S$ is in fact the shortest
  access string to a particular state.

  However, it is possible that the transition assigned by the
  $\varepsilon$-closure algorithm is incorrect; that is, while exploring the
  $\sigma$-transition of some state $t$, instead of assigning the correct next
  state $s$, we instead assign some other state $s'$ of $A^*$. Let $v\in\Sigma^*$ be the
  access string in $S$ corresponding to $t$ that we are exploring; by the BFS
  traversal we know that $v$ is the shortest string accessing $t$. Let
  $u,u'\in\Sigma^*$ be access strings to the states $s$ and $s'$, respectively.
  By the definition of the $\varepsilon$-closure algorithm, we know that
  \begin{equation*}
    d_Q(u\inv L_Q, (u')\inv L_Q)\leq\varepsilon=2(\lambda|\Sigma|)^{|v|}\eta.
  \end{equation*}
  By Lemma~\ref{lem:one}, this can only occur if
  \begin{equation*}
    d_Q(u\inv A^*, (u')\inv A*)\leq 2(\lambda|\Sigma|)^{|u|}d_Q(A^*, L_Q)+\varepsilon\leq 4(\lambda|\Sigma|)^{|u|}\eta.
  \end{equation*}

  The idea now is that whenever the algorithm mistakenly assigns a transition to
  $s$ instead of $s'$, we know that the two states are not too far apart in the
  true automaton $A^*$. Since our hypothesis automaton $A$ will have the same set of
  states as $A^*$, with some number of incorrect transitions, we can use
  Lemma~\ref{lem:two} to bound the distance from $A$ to $A^*$ (and thus
  $d_Q(A,L_Q)$, by triangle inequality). More precisely, let us think of the $A$
  as the result of a sequence of no more than $|\Sigma|n$ mistakes
  (since there are only this
  many transitions in $A^*$); this corresponds to a sequence of automata
  $A^{(1)},\ldots,A^{(k)}$ on the same states where $A^{(1)}=A$ is the true automaton and
  $A^{(k)}=A$ is our hypothesis, and each consecutive pair $A^{(j)},A^{(j+1)}$
  differs on exactly one transition. That is, letting $\delta^{(j)}$ and
  $\delta^{(j+1)}$ be the transition functions for $A^{(j)}$ and $A^{(j+1)}$
  respectively, we have for some states $t_j, s_j, s'_j$ and $\sigma_j\in\Sigma$
  that
  \begin{equation*}
    \delta^{(j)}(t_j,\sigma_j)=s_j
  \end{equation*}
  while
  \begin{equation*}
    \delta^{(j+1)}(t_j,\sigma_j)=s'_j.
  \end{equation*}
  By Lemma~\ref{lem:one},
  \begin{equation*}
    \begin{aligned}
      d_Q(s_{j+1}\inv A^{(j+1)}, (s'_{j+1})\inv A^{(j+1)})&\leq 2(\lambda|\Sigma|)^nd_Q(A^{(j)}, A^{(j+1)})+d_Q(s_{j+1}\inv A^*, (s'_{j+1})\inv A^*)\\
      &\leq 2(\lambda|\Sigma|)^nd_Q(A^{(j)}, A^{(j+1)})+4(\lambda|\Sigma|)^{n}\eta.
    \end{aligned}
  \end{equation*}
  By Lemma~\ref{lem:two},
  \begin{equation*}
    d_Q(A^{(j)}, A^{(j+1)})\leq (\lambda\inv -1)\inv d_Q(s_j\inv A^{(j)}, (s'_j)\inv A^{(j)})
  \end{equation*}
  (we drop the $\lambda^{|s_j|}$ term, which is less than one.)
  Thus, we have the recurrence
  \begin{equation*}
    d_Q(A^{(j)}, A^{(j+1)})\leq 2(\lambda\inv-1)(\lambda|\Sigma|)^{n}d_Q(A^{(j-1)}, A^{(j)})+4(\lambda|\Sigma|)^n.
  \end{equation*}
  Unraveling this recurrence using the triangle inequality and the fact that $k\leq |\Sigma|n$, we have
  \begin{equation*}
    d_Q(A^*,A)=d_Q(A^{(1)},A^{(k)})\leq (\lambda|\Sigma|)^{O(|\Sigma|n^2)}\eta
  \end{equation*}
  as desired.

  Note that the $\varepsilon$-closure algorithm provides only the transition
  topology of $A$, but once this is known it is not difficult to learn the
  accepting states with only $O(\eta)$ additional error.
\end{proof}
Admittedly, the bounds in the proof are fairly crude. However, even with a more
refined analysis, the exponential term remains.
\end{document}